\documentclass[11pt]{article}

\usepackage[T1]{fontenc}
\usepackage[utf8]{inputenc}
\usepackage{amsmath,amssymb,amsthm,mathtools,mathrsfs}
\usepackage{geometry}
\usepackage{hyperref}

\newtheorem{theorem}{Theorem}
\newtheorem{proposition}[theorem]{Proposition}
\newtheorem{corollary}[theorem]{Corollary}
\newtheorem{lemma}[theorem]{Lemma}

\theoremstyle{definition}

\theoremstyle{remark}
\newtheorem{remark}[theorem]{Remark}

\newcommand{\R}{\mathbb{R}}
\newcommand{\Tr}{\operatorname{Tr}}
\newcommand{\im}{\operatorname{im}}
\newcommand{\CS}{\operatorname{CS}}
\newcommand{\Var}{\operatorname{Var}}
\newcommand{\Cov}{\operatorname{Cov}}
\newcommand{\doi}[1]{\href{https://doi.org/#1}{doi:#1}}

\title{
Chern--Simons Fluctuations and Information Geometry\\
in Discrete Electromagnetism
}

\author{
Jean-Pierre Magnot
}

\date{}

\begin{document}

\maketitle

\begin{abstract}
We construct helicity-conditioned statistical states for a
Whitney-discretized electromagnetic field on a closed oriented
three-manifold. The simplicial de Rham complex provides exact
discrete gauge symmetry, while the Whitney inner product separates
exact, harmonic, and coexact sectors. The spatial Abelian
Chern--Simons functional is gauge invariant and depends only on the
coexact potential. After fixing harmonic modes, we introduce a
helicity-biased Gaussian ensemble on the reduced electromagnetic
phase space and derive explicit formulas for its admissible
parameters, partition function, mean helicity, relative entropy,
and Fisher information. The distribution uniquely minimizes
relative entropy under a prescribed mean-helicity constraint.
Its helicity susceptibility equals the variance of the discrete
Chern--Simons functional and controls the local distinguishability
of neighboring statistical states. Because magnetic helicity is
generally not conserved under unconstrained Maxwell dynamics, these
states represent conditioned inference rather than dynamical
equilibrium.
\end{abstract}

\medskip

\noindent
\textbf{Keywords:}
Whitney forms; discrete electromagnetism; magnetic helicity;
Chern--Simons functional; relative entropy; Fisher information;
gauge reduction.
\medskip

\noindent
\textbf{2020 Mathematics Subject Classification:}
81T13; 81P45; 94A17; 65N30; 58A12.
\section{Introduction}

Magnetic helicity plays a fundamental role in plasma physics,
magnetohydrodynamics, and the geometry of electromagnetic fields.
For a globally defined magnetic potential \(A\) on a closed
oriented three-manifold \(M\), it is given by
\begin{equation}
\mathscr H(A)
=
\int_M A\wedge dA.
\label{eq:continuous-helicity}
\end{equation}
In vector notation,
\begin{equation}
\mathscr H(A)
=
\int_M
A\cdot(\nabla\times A)\,d\operatorname{vol}.
\label{eq:vector-helicity}
\end{equation}

The relevance of helicity to force-free magnetic configurations
and relaxed plasma states goes back to Woltjer and Taylor
\cite{Woltjer1958,Taylor1974}. Its interpretation through
linkage and knottedness was developed by Moffatt
\cite{Moffatt1969}; see also \cite{ArnoldKhesin1998}.
The expression \eqref{eq:continuous-helicity} is simultaneously
the spatial Abelian instance of a Chern--Simons functional
\cite{ChernSimons1974}.

Helicity-constrained statistical constructions are not new.
In particular, Frisch, Pouquet, Léorat, and Mazure derived
absolute equilibrium spectra for spectrally truncated
magnetohydrodynamics while incorporating magnetic helicity
\cite{Frisch1975}. More generally, constrained statistical
inference naturally belongs to the maximum-entropy framework
introduced by Jaynes \cite{Jaynes1957,Jaynes1957b}.

The purpose of this paper is different. We formulate the
helicity constraint directly on a Whitney-discretized
electromagnetic complex, prove that it descends exactly through
discrete gauge reduction, and derive explicit mesh-level
expressions for its information-geometric fluctuations.

Whitney forms and finite element exterior calculus produce
discrete de Rham complexes preserving the cohomological
identities underlying gauge theories
\cite{ArnoldFalkWinther2006,ArnoldFalkWinther2010}.
These structures are particularly useful in computational
electromagnetism
\cite{Bossavit1998,Stern2015,Desbrun2005}.

For a simplicial one-cochain \(a\), we consider
\begin{equation}
\CS_K(a)
=
\int_M W_1(a)\wedge dW_1(a),
\label{eq:introduction-discrete-cs}
\end{equation}
where \(W_1\) is the Whitney map.

We show that \(\CS_K\) annihilates both exact gauge components
and harmonic components. After fixing the harmonic sector,
the remaining radiative phase space supports the probability
density
\begin{equation}
p_{\beta,\gamma}(a,e)
=
Z_K(\beta,\gamma)^{-1}
\exp\left(
-\beta E_K(a,e)
-\gamma\CS_K(a)
\right).
\label{eq:introduction-density}
\end{equation}
Here \(E_K\) denotes the reduced Maxwell energy, \(\beta>0\)
is the inverse-temperature parameter of a reference Maxwell
distribution, and \(\gamma\) is the multiplier conjugate to
the prescribed mean helicity.

The principal information-geometric identity is
\begin{equation}
\partial_\gamma^2\log Z_K
=
\Var_{\beta,\gamma}(\CS_K).
\label{eq:introduction-fisher}
\end{equation}
Consequently,
\begin{equation}
D\left(
p_{\beta,\gamma}
\,\middle\|\,
p_{\beta,\gamma+\varepsilon}
\right)
=
\frac{\varepsilon^2}{2}
\Var_{\beta,\gamma}(\CS_K)
+
O(\varepsilon^3).
\label{eq:introduction-relative}
\end{equation}

These fluctuation identities are familiar for exponential
families. Their specific contribution here is their exact
realization on a gauge-reduced Whitney electromagnetic
complex, together with computable mesh matrices and a
mode-resolved normalizability criterion.

A physical qualification is essential. Magnetic helicity alone
is not generally conserved by unconstrained Maxwell evolution.
Therefore, the distributions
\eqref{eq:introduction-density} are interpreted as
instantaneous helicity-conditioned inference states, not as
stationary equilibria of the free Maxwell dynamics.

Throughout, we work in the topologically trivial Abelian
bundle sector. Potentials are globally represented by
one-forms, and discrete gauge transformations are generated by
globally defined zero-cochains. No identification with
topologically massive Maxwell--Chern--Simons theory in
\(2+1\)-dimensional spacetime is intended.

\section{Whitney discretization and gauge reduction}

\subsection{Simplicial electromagnetic variables}

Let \(M\) be a closed oriented Riemannian three-manifold and
\(K\) a finite triangulation of \(M\). We write
\begin{equation}
C^k(K)
=
C^k(K;\R)
\end{equation}
for the space of real simplicial \(k\)-cochains.

The simplicial coboundary operators
\begin{equation}
\delta_k:
C^k(K)
\longrightarrow
C^{k+1}(K)
\end{equation}
form the complex
\begin{equation}
0
\longrightarrow
C^0(K)
\xrightarrow{\delta_0}
C^1(K)
\xrightarrow{\delta_1}
C^2(K)
\xrightarrow{\delta_2}
C^3(K)
\longrightarrow
0,
\label{eq:cochain-complex}
\end{equation}
with
\begin{equation}
\delta_{k+1}\delta_k=0.
\label{eq:delta-square}
\end{equation}

Let
\begin{equation}
W_k:
C^k(K)
\longrightarrow
\Omega^k_{\mathrm{pw}}(M)
\end{equation}
be the Whitney maps. They satisfy
\begin{equation}
dW_k
=
W_{k+1}\delta_k.
\label{eq:whitney-commutation}
\end{equation}

Whitney forms are understood as conforming finite element
differential forms. Integration-by-parts identities are used
in the weak sense; interior interface contributions cancel
because the relevant tangential traces agree across simplices.

A discrete magnetic potential is
\begin{equation}
a\in C^1(K),
\end{equation}
and its magnetic field is represented by
\begin{equation}
b=\delta_1a\in C^2(K).
\label{eq:magnetic-cochain}
\end{equation}

The discrete gauge transformation associated with
\(\chi\in C^0(K)\) is
\begin{equation}
a\longmapsto a+\delta_0\chi.
\label{eq:gauge-transformation}
\end{equation}
Equation \eqref{eq:delta-square} implies
\begin{equation}
\delta_1(a+\delta_0\chi)
=
\delta_1a.
\label{eq:magnetic-invariance}
\end{equation}

Thus the magnetic field is exactly invariant under the
discrete gauge symmetry.

\subsection{Whitney inner products and discrete Hodge sectors}

For each degree \(k\), define
\begin{equation}
\langle u,v\rangle_k
=
\int_M
W_k(u)\wedge\star W_k(v).
\label{eq:whitney-inner}
\end{equation}
The corresponding adjoint of \(\delta_k\) is denoted by
\begin{equation}
\delta_k^\ast:
C^{k+1}(K)
\longrightarrow
C^k(K).
\end{equation}

The degree-one harmonic space is
\begin{equation}
\mathcal H_K^1
=
\ker\delta_1
\cap
\ker\delta_0^\ast.
\label{eq:harmonic}
\end{equation}

Finite-dimensional Hodge theory gives
\begin{equation}
C^1(K)
=
\im\delta_0
\oplus
\mathcal H_K^1
\oplus
\im\delta_1^\ast,
\label{eq:hodge}
\end{equation}
with orthogonality for the Whitney inner product.

We introduce the notation
\begin{equation}
V_{\mathrm g}
=
\im\delta_0,
\qquad
V_{\mathrm h}
=
\mathcal H_K^1,
\qquad
V_{\mathrm T}
=
\im\delta_1^\ast.
\label{eq:sector-notation}
\end{equation}

The subscripts refer to gauge, harmonic, and transverse
radiative sectors, respectively.

Let \(e\in C^1(K)\) represent the electric variable using
the Riesz identification associated with
\eqref{eq:whitney-inner}. The discrete source-free Gauss law
is
\begin{equation}
\delta_0^\ast e=0.
\label{eq:gauss}
\end{equation}
Hence
\begin{equation}
e\in
V_{\mathrm h}\oplus V_{\mathrm T}.
\label{eq:e-sector}
\end{equation}

The reduced phase space is
\begin{equation}
\mathcal P_{\mathrm{red}}
=
\frac{
\left\{
(a,e)\in C^1(K)\oplus C^1(K):
\delta_0^\ast e=0
\right\}
}{
(a,e)\sim(a+\delta_0\chi,e)
}.
\label{eq:reduced-phase}
\end{equation}
Using \eqref{eq:hodge}, it identifies with
\begin{equation}
\mathcal P_{\mathrm{red}}
\simeq
\left(
V_{\mathrm h}\oplus V_{\mathrm T}
\right)
\oplus
\left(
V_{\mathrm h}\oplus V_{\mathrm T}
\right).
\label{eq:reduced-identification}
\end{equation}

The magnetic energy vanishes on harmonic potentials.
Therefore, integrating without additional constraints over
the harmonic potential sector would produce a nonnormalizable
Gaussian weight.

We fix both the harmonic potential sector and the harmonic
electric sector, and restrict the discussion to
\begin{equation}
\mathcal P_{\mathrm T}
=
V_{\mathrm T}\oplus V_{\mathrm T}.
\label{eq:radiative-phase}
\end{equation}

\begin{remark}
Harmonic modes are not gauge artifacts. They represent
cohomological degrees of freedom associated with
\(H^1(K;\R)\). Fixing them selects a radiative sector and
does not amount to quotienting them out as gauge directions.
The electric harmonic modes could alternatively be retained;
their Gaussian integration would merely contribute an
independent factor
\[
\left(
\frac{2\pi}{\beta}
\right)^{\dim\mathcal H_K^1/2}
\]
to the partition function.
\end{remark}

\section{The discrete Abelian Chern--Simons pairing}

Define
\begin{equation}
c_K(a,b)
=
\int_M
W_1(a)\wedge dW_1(b),
\qquad
a,b\in C^1(K),
\label{eq:pairing}
\end{equation}
and
\begin{equation}
\CS_K(a)
=
c_K(a,a).
\label{eq:cs-definition}
\end{equation}

\begin{proposition}
\label{prop:gauge-harmonic}
The bilinear form \(c_K\) is symmetric and satisfies
\begin{equation}
c_K(\delta_0\chi,a)=0
\label{eq:exact-annihilation}
\end{equation}
for all \(\chi\in C^0(K)\) and \(a\in C^1(K)\).

It also satisfies
\begin{equation}
c_K(h,a)=0
\label{eq:harmonic-annihilation}
\end{equation}
for every \(h\in\mathcal H_K^1\).

Consequently, if
\begin{equation}
a=a_{\mathrm g}+a_{\mathrm h}+a_{\mathrm T}
\end{equation}
is the decomposition \eqref{eq:hodge}, then
\begin{equation}
\CS_K(a)
=
\CS_K(a_{\mathrm T}).
\label{eq:only-transverse}
\end{equation}
\end{proposition}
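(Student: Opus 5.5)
The plan is to prove the three claims directly from the commutation relation \eqref{eq:whitney-commutation} and Stokes' theorem on the closed manifold \(M\), and then to obtain \eqref{eq:only-transverse} by bilinear expansion. Throughout, write \(\alpha=W_1(a)\) and \(\beta=W_1(b)\).

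\emph{Symmetry.} Use the Leibniz rule for the wedge of two one-forms, \(d(\alpha\wedge\beta)=d\alpha\wedge\beta-\alpha\wedge d\beta\). Since a one-form and a two-form commute, \(d\alpha\wedge\beta=\beta\wedge d\alpha\). Integrating over \(M\) gives
\[
c_K(a,b)-c_K(b,a)=-\int_M d\big(W_1(a)\wedge W_1(b)\big).
\]
This term vanishes by Stokes, because \(\partial M=\emptyset\).

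The delicate point is that \(\alpha\wedge\beta\) is only piecewise smooth. I would apply Stokes simplex by simplex, which produces a sum of boundary integrals over the two-faces. Each interior face appears twice, with opposite induced orientations. The integrand on a face is the pullback of \(\alpha\wedge\beta\) to that face. This pullback depends only on the tangential traces of \(\alpha\) and \(\beta\), and those traces agree from both sides for Whitney one-forms. Hence the face contributions cancel pairwise, and there are no exterior faces. This is exactly the weak integration-by-parts convention stated in the paper, and I expect it to be the only step requiring care. Everything else is algebraic.

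\emph{Exact annihilation.} By \eqref{eq:whitney-commutation}, \(W_1(\delta_0\chi)=dW_0(\chi)=d\varphi\), where \(\varphi=W_0(\chi)\) is a continuous piecewise-affine function. Then
\[
c_K(\delta_0\chi,a)=\int_M d\varphi\wedge d\alpha=\int_M d(\varphi\, d\alpha),
\]
using \(d\,d\alpha=0\) on each simplex. This vanishes by the same simplexwise Stokes argument. Here \(\varphi\) is continuous, and \(d\alpha=W_2(\delta_1 a)\) has matching traces on shared faces, so the face terms cancel again.

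\emph{Harmonic annihilation.} If \(h\in\mathcal H^1_K\), then \(\delta_1h=0\), so \(dW_1(h)=W_2(\delta_1 h)=0\). Hence \(c_K(a,h)=0\) directly from the definition \eqref{eq:pairing}, and symmetry gives \(c_K(h,a)=0\). Only the condition \(h\in\ker\delta_1\) is used, so the coclosedness condition \(\ker\delta_0^\ast\) plays no role in this step.

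\emph{Conclusion.} Expand \(\CS_K(a_{\mathrm g}+a_{\mathrm h}+a_{\mathrm T})\) by bilinearity into nine terms. Every term containing \(a_{\mathrm g}\in\im\delta_0\) or \(a_{\mathrm h}\in\mathcal H^1_K\) in either slot vanishes by \eqref{eq:exact-annihilation} or \eqref{eq:harmonic-annihilation}, using symmetry to move that factor into the first slot. Only \(c_K(a_{\mathrm T},a_{\mathrm T})=\CS_K(a_{\mathrm T})\) survives.
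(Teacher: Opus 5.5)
Your proposal is correct and follows essentially the same route as the paper. The paper also uses the Leibniz rule plus Stokes for symmetry, writes $dW_0(\chi)\wedge dW_1(a)=d\bigl(W_0(\chi)\,dW_1(a)\bigr)$ for exact annihilation, uses $dW_1(h)=W_2(\delta_1 h)=0$ with symmetry for the harmonic sector, and expands bilinearly. Your simplex-by-simplex Stokes argument, with cancellation of matching traces on interior faces, only makes explicit the weak integration-by-parts convention that the paper states in its setup and uses implicitly.
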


\begin{proof}
Let
\begin{equation}
\alpha=W_1(a),
\qquad
\eta=W_1(b).
\end{equation}
Since \(M\) has no boundary,
\begin{equation}
0
=
\int_M d(\alpha\wedge\eta)
=
\int_M d\alpha\wedge\eta
-
\int_M\alpha\wedge d\eta.
\end{equation}
Because forms of degrees two and one commute under the wedge
product,
\begin{equation}
d\alpha\wedge\eta
=
\eta\wedge d\alpha.
\end{equation}
Therefore,
\begin{equation}
\int_M\alpha\wedge d\eta
=
\int_M\eta\wedge d\alpha,
\end{equation}
which proves symmetry.

From \eqref{eq:whitney-commutation},
\begin{equation}
W_1(\delta_0\chi)
=
dW_0(\chi).
\end{equation}
Hence
\begin{align}
c_K(\delta_0\chi,a)
&=
\int_M
dW_0(\chi)\wedge dW_1(a)
\nonumber\\
&=
\int_M
d\left(
W_0(\chi)dW_1(a)
\right)
\nonumber\\
&=
0.
\end{align}

If \(h\in\mathcal H_K^1\), then
\begin{equation}
\delta_1h=0,
\end{equation}
and therefore
\begin{equation}
dW_1(h)
=
W_2(\delta_1h)
=
0.
\end{equation}
Using symmetry,
\begin{equation}
c_K(h,a)
=
c_K(a,h)
=
\int_M
W_1(a)\wedge dW_1(h)
=
0.
\end{equation}

Expanding \(c_K(a,a)\) along \eqref{eq:hodge}, all terms
containing \(a_{\mathrm g}\) or \(a_{\mathrm h}\) vanish.
This gives \eqref{eq:only-transverse}.
\end{proof}

\begin{remark}
The vanishing of the harmonic contribution relies on working
with globally defined Abelian potentials on a closed
manifold. Nontrivial principal bundles, large gauge
transformations, and boundary contributions require additional
structure and are not included here.
\end{remark}

Let
\begin{equation}
n=\dim V_{\mathrm T}.
\end{equation}
Choose a Whitney-orthonormal basis of \(V_{\mathrm T}\). In
that basis, there exists a real symmetric matrix
\begin{equation}
C=C^\top\in\R^{n\times n}
\end{equation}
such that
\begin{equation}
\CS_K(a)
=
a^\top Ca.
\label{eq:matrix-cs}
\end{equation}

The reduced magnetic energy is
\begin{equation}
E_{\mathrm m}(a)
=
\frac12
\langle\delta_1a,\delta_1a\rangle_2.
\label{eq:magnetic-energy}
\end{equation}
Thus
\begin{equation}
E_{\mathrm m}(a)
=
\frac12a^\top La
\label{eq:matrix-energy}
\end{equation}
for a symmetric matrix
\begin{equation}
L=L^\top.
\end{equation}

\begin{lemma}
\label{lem:positive}
The matrix \(L\) is positive definite.
\end{lemma}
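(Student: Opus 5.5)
The plan is to show that the quadratic form \(a\mapsto a^\top La=\langle\delta_1a,\delta_1a\rangle_2\) is positive semidefinite and has trivial kernel on \(V_{\mathrm T}\). Since \(L\) is symmetric, these two facts together give positive definiteness.

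\emph{Semidefiniteness.} First I would check that \(\langle\cdot,\cdot\rangle_2\) from \eqref{eq:whitney-inner} is a genuine inner product on \(C^2(K)\). For that I need \(W_2\) to be injective. This follows because the de Rham integration map satisfies \(R_2W_2=\mathrm{id}\): integrating a Whitney 2-form over the faces of \(K\) recovers its cochain. So \(W_2(u)=0\) forces \(u=0\), and then
\[
\langle u,u\rangle_2=\int_M W_2(u)\wedge\star W_2(u)=\|W_2(u)\|_{L^2}^2>0
\]
for \(u\neq0\). Writing \(a\) in the Whitney-orthonormal basis of \(V_{\mathrm T}\), we get \(a^\top La=\langle\delta_1a,\delta_1a\rangle_2\ge0\).

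\emph{Definiteness.} Suppose \(a\in V_{\mathrm T}\) satisfies \(a^\top La=0\). Then \(\delta_1a=0\) by the previous step. Since \(V_{\mathrm T}=\im\delta_1^\ast\), we can write \(a=\delta_1^\ast c\) for some \(c\in C^2(K)\). This gives
\[
\langle a,a\rangle_1=\langle\delta_1^\ast c,a\rangle_1=\langle c,\delta_1a\rangle_2=0 .
\]
The same injectivity argument for \(W_1\) shows that \(\langle\cdot,\cdot\rangle_1\) is an inner product, so \(a=0\). Equivalently, \(\ker\delta_1\cap\im\delta_1^\ast=0\), which is exactly the orthogonality in the Hodge decomposition \eqref{eq:hodge}. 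Hence \(\ker L=0\), and \(L\) is positive definite.

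The only nontrivial ingredient is the nondegeneracy of the Whitney inner products. This reduces to the standard fact that Whitney maps are right inverses of the de Rham map, so \(W_k\) is injective. The rest is finite-dimensional linear algebra.
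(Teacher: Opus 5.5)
Your proof is correct and follows the paper's argument: $a^\top La=0$ forces $\delta_1a=0$, and $\ker\delta_1\cap\im\delta_1^\ast=0$ because $\ker\delta_1=(\im\delta_1^\ast)^\perp$, which you verify directly through the adjoint identity $\langle\delta_1^\ast c,a\rangle_1=\langle c,\delta_1a\rangle_2$. The only additions are your explicit justifications of semidefiniteness and of the nondegeneracy of the Whitney inner products via $R_kW_k=\mathrm{id}$; the paper uses both without comment.
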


\begin{proof}
If
\begin{equation}
a^\top La=0,
\end{equation}
then
\begin{equation}
\delta_1a=0.
\end{equation}
Since \(a\in V_{\mathrm T}=\im\delta_1^\ast\),
\begin{equation}
a\in
\ker\delta_1\cap\im\delta_1^\ast.
\end{equation}
However,
\begin{equation}
\ker\delta_1
=
(\im\delta_1^\ast)^\perp.
\end{equation}
Therefore \(a=0\), proving positive definiteness.
\end{proof}

\section{Helicity-conditioned electromagnetic states}

\subsection{Normalizability and partition function}

On
\begin{equation}
\mathcal P_{\mathrm T}
\simeq
\R^n\oplus\R^n,
\end{equation}
define the radiative Maxwell energy
\begin{equation}
E_K(a,e)
=
\frac12e^\top e
+
\frac12a^\top La.
\label{eq:total-energy}
\end{equation}

For
\begin{equation}
\beta>0,
\qquad
\gamma\in\R,
\end{equation}
consider
\begin{equation}
p_{\beta,\gamma}(a,e)
=
Z_K(\beta,\gamma)^{-1}
\exp\left(
-\beta E_K(a,e)
-
\gamma\CS_K(a)
\right),
\label{eq:density}
\end{equation}
whenever the normalization factor exists.

Introduce
\begin{equation}
Q_{\beta,\gamma}
=
\beta L+2\gamma C.
\label{eq:Q}
\end{equation}

\begin{theorem}
\label{thm:partition}
The partition function
\begin{equation}
Z_K(\beta,\gamma)
=
\int_{\R^n}
\int_{\R^n}
\exp\left(
-\beta E_K(a,e)
-
\gamma\CS_K(a)
\right)
\,da\,de
\label{eq:Z-definition}
\end{equation}
is finite if and only if
\begin{equation}
Q_{\beta,\gamma}>0.
\label{eq:Q-positive}
\end{equation}

On the admissible set
\begin{equation}
\mathscr D_K
=
\left\{
(\beta,\gamma)\in(0,\infty)\times\R:
\beta L+2\gamma C>0
\right\},
\label{eq:admissible}
\end{equation}
one has
\begin{equation}
Z_K(\beta,\gamma)
=
(2\pi)^n
\beta^{-n/2}
\det(Q_{\beta,\gamma})^{-1/2}.
\label{eq:Z-explicit}
\end{equation}
\end{theorem}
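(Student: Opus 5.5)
The integrand factorizes, so I will treat the two variables separately. Using \eqref{eq:total-energy} and \eqref{eq:matrix-cs}, the exponent is
\[
-\beta E_K(a,e)-\gamma\CS_K(a)
=
-\tfrac{\beta}{2}e^\top e-\tfrac12 a^\top Q_{\beta,\gamma}a,
\]
since $\tfrac{\beta}{2}a^\top La+\gamma a^\top Ca=\tfrac12 a^\top(\beta L+2\gamma C)a$. By Tonelli's theorem (the integrand is nonnegative),
\[
Z_K(\beta,\gamma)=\left(\int_{\R^n}e^{-\frac{\beta}{2}e^\top e}\,de\right)\left(\int_{\R^n}e^{-\frac12 a^\top Q_{\beta,\gamma}a}\,da\right)
\]
holds in $[0,\infty]$. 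The first factor is a standard Gaussian integral equal to $(2\pi/\beta)^{n/2}$, which is finite and positive because $\beta>0$. Hence $Z_K$ is finite if and only if the $a$-integral is finite.

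For the $a$-integral I will diagonalize the real symmetric matrix $Q=Q_{\beta,\gamma}$ by an orthogonal change of variables $a=Ou$, so that $a^\top Qa=\sum_i\lambda_i u_i^2$. Since $|\det O|=1$, Lebesgue measure is preserved, and Tonelli again splits the integral into $\prod_i\int_\R e^{-\lambda_i u_i^2/2}\,du_i$.

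If every $\lambda_i>0$, each factor equals $(2\pi/\lambda_i)^{1/2}$, and the product is $(2\pi)^{n/2}\det(Q)^{-1/2}$. Conversely, if some $\lambda_i\le 0$, the corresponding one-dimensional integral of $e^{-\lambda_i u^2/2}\ge 1$ over $\R$ diverges. All the other factors are strictly positive (possibly infinite), so the product is $+\infty$. This gives the equivalence between finiteness of $Z_K$ and $Q_{\beta,\gamma}>0$.

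Combining the two factors yields
\[
Z_K(\beta,\gamma)=(2\pi/\beta)^{n/2}(2\pi)^{n/2}\det(Q_{\beta,\gamma})^{-1/2}=(2\pi)^n\beta^{-n/2}\det(Q_{\beta,\gamma})^{-1/2},
\]
which is \eqref{eq:Z-explicit} on $\mathscr D_K$.

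The only delicate point is the divergence direction. One has to make sure that a single nonpositive eigenvalue forces divergence, rather than being compensated by the other factors. The product structure with all factors strictly positive (in $(0,\infty]$) handles this cleanly, so I expect no real obstacle.

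A side remark, not needed for the proof: the admissible set is nonempty for each $\beta>0$, since Lemma \ref{lem:positive} gives $Q_{\beta,0}=\beta L>0$.
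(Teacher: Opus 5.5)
Your proposal is correct and follows the paper's proof: you split $Z_K$ into the $e$-integral and the $a$-integral and evaluate both as Gaussians. The paper simply asserts that the $a$-factor is finite exactly when $Q_{\beta,\gamma}>0$; your orthogonal diagonalization and Tonelli argument, in which a single eigenvalue $\le 0$ gives an infinite factor that the others, lying in $(0,\infty]$, cannot offset, fills in that step properly.
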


\begin{proof}
Using \eqref{eq:total-energy} and \eqref{eq:matrix-cs},
\begin{equation}
\beta E_K(a,e)+\gamma\CS_K(a)
=
\frac{\beta}{2}e^\top e
+
\frac12a^\top Q_{\beta,\gamma}a.
\end{equation}
Consequently,
\begin{align}
Z_K(\beta,\gamma)
&=
\left(
\int_{\R^n}
e^{-\beta e^\top e/2}\,de
\right)
\nonumber\\
&\quad\times
\left(
\int_{\R^n}
e^{-a^\top Q_{\beta,\gamma}a/2}\,da
\right).
\end{align}

The first factor is finite because \(\beta>0\).
The second factor is finite precisely when
\(Q_{\beta,\gamma}\) is positive definite.

The standard Gaussian formulas give
\begin{equation}
\int_{\R^n}
e^{-\beta e^\top e/2}\,de
=
(2\pi)^{n/2}\beta^{-n/2}
\end{equation}
and
\begin{equation}
\int_{\R^n}
e^{-a^\top Q_{\beta,\gamma}a/2}\,da
=
(2\pi)^{n/2}
\det(Q_{\beta,\gamma})^{-1/2}.
\end{equation}
Their product is \eqref{eq:Z-explicit}.
\end{proof}

\begin{corollary}
\label{cor:spectral}
Let
\begin{equation}
T=L^{-1/2}CL^{-1/2}
\label{eq:T}
\end{equation}
and let
\begin{equation}
\lambda_1,\dots,\lambda_n
\end{equation}
be the eigenvalues of \(T\).

Then
\begin{equation}
(\beta,\gamma)\in\mathscr D_K
\end{equation}
if and only if
\begin{equation}
\beta+2\gamma\lambda_j>0
\label{eq:eigenvalue-condition}
\end{equation}
for every \(j=1,\dots,n\).

Furthermore,
\begin{equation}
Z_K(\beta,\gamma)
=
(2\pi)^n
\beta^{-n/2}
\det(L)^{-1/2}
\prod_{j=1}^n
(\beta+2\gamma\lambda_j)^{-1/2}.
\label{eq:Z-eigenvalues}
\end{equation}
\end{corollary}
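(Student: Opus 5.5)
The plan is to reduce everything to a congruence transformation by \(L^{1/2}\). By Lemma~\ref{lem:positive}, \(L\) is symmetric positive definite, so it has a unique symmetric positive definite square root \(L^{1/2}\) with inverse \(L^{-1/2}\). Because \(C\) is symmetric, \(T=L^{-1/2}CL^{-1/2}\) is real symmetric. Hence its eigenvalues \(\lambda_1,\dots,\lambda_n\) are real, and there is an orthogonal \(U\) with \(T=U\Lambda U^\top\), where \(\Lambda=\operatorname{diag}(\lambda_1,\dots,\lambda_n)\).

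The key identity is the factorization
\begin{equation*}
Q_{\beta,\gamma}=\beta L+2\gamma C
=
L^{1/2}\left(\beta I+2\gamma T\right)L^{1/2}
=
L^{1/2}U\left(\beta I+2\gamma\Lambda\right)U^\top L^{1/2}.
\end{equation*}
This exhibits \(Q_{\beta,\gamma}\) as congruent, via the invertible matrix \(S=L^{1/2}U\), to the diagonal matrix \(\beta I+2\gamma\Lambda\).

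For the admissibility claim I will use that positive definiteness is preserved under congruence by an invertible matrix. Concretely, \(a^\top Q_{\beta,\gamma}a=x^\top(\beta I+2\gamma\Lambda)x\) with \(x=S^\top a\), and \(a\mapsto S^\top a\) is a bijection of \(\R^n\setminus\{0\}\). So \(Q_{\beta,\gamma}>0\) if and only if every diagonal entry \(\beta+2\gamma\lambda_j\) is positive. Together with the definition \eqref{eq:admissible} of \(\mathscr D_K\) (and \(\beta>0\)), this gives condition \eqref{eq:eigenvalue-condition}.

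For the determinant formula, I take determinants in the factorization: \(\det Q_{\beta,\gamma}=\det(L^{1/2})^2\det U\det U^\top\prod_j(\beta+2\gamma\lambda_j)=\det(L)\prod_j(\beta+2\gamma\lambda_j)\). On \(\mathscr D_K\) every factor is positive, so I can take the inverse square root factor by factor. Substituting into \eqref{eq:Z-explicit} from Theorem~\ref{thm:partition} gives \eqref{eq:Z-eigenvalues}.

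There is no real obstacle here. The only point needing care is to justify that \(T\) is symmetric, so that real eigenvalues and orthogonal diagonalization are available. This is why I use the symmetric square root \(L^{1/2}\) rather than an arbitrary factorization of \(L\); with it the congruence argument goes through cleanly.
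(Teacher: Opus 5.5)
Your proof is correct and follows the same route as the paper: the factorization $Q_{\beta,\gamma}=L^{1/2}(\beta I+2\gamma T)L^{1/2}$, invariance of positive definiteness under congruence by an invertible matrix, diagonalization of $T$, and the determinant identity $\det Q_{\beta,\gamma}=\det(L)\prod_j(\beta+2\gamma\lambda_j)$ substituted into \eqref{eq:Z-explicit}. Your remarks on the symmetry of $T$ and on keeping $\beta>0$ as a standing assumption (the eigenvalue inequalities alone do not force it) are slightly more careful than the paper's version.
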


\begin{proof}
We have
\begin{equation}
Q_{\beta,\gamma}
=
L^{1/2}
\left(
\beta I+2\gamma T
\right)
L^{1/2}.
\end{equation}
Since \(L^{1/2}\) is invertible, \(Q_{\beta,\gamma}>0\) if
and only if
\begin{equation}
\beta I+2\gamma T>0.
\end{equation}
Diagonalizing \(T\) yields
\eqref{eq:eigenvalue-condition}.

Taking determinants,
\begin{equation}
\det(Q_{\beta,\gamma})
=
\det(L)
\prod_{j=1}^n
(\beta+2\gamma\lambda_j).
\end{equation}
Substitution into \eqref{eq:Z-explicit} proves
\eqref{eq:Z-eigenvalues}.
\end{proof}

For fixed \(\beta>0\), the admissible helicity multipliers
form the open interval
\begin{equation}
I_\beta
=
\bigcap_{j=1}^n
\left\{
\gamma\in\R:
\beta+2\gamma\lambda_j>0
\right\}.
\label{eq:I-beta}
\end{equation}
This interval always contains \(\gamma=0\).

If both positive and negative generalized helicity
eigenvalues occur, \(I_\beta\) is bounded on both sides.
If all nonzero eigenvalues have the same sign, it is a
half-line.

\subsection{Mean energy, mean helicity, and entropy}

Expectation with respect to \(p_{\beta,\gamma}\) is denoted
by
\begin{equation}
\langle\cdot\rangle_{\beta,\gamma}.
\end{equation}

\begin{proposition}
\label{prop:means}
On \(\mathscr D_K\),
\begin{equation}
\langle E_K\rangle_{\beta,\gamma}
=
-\partial_\beta\log Z_K
=
\frac{n}{2\beta}
+
\frac12
\Tr\left(
Q_{\beta,\gamma}^{-1}L
\right)
\label{eq:mean-energy}
\end{equation}
and
\begin{equation}
\langle\CS_K\rangle_{\beta,\gamma}
=
-\partial_\gamma\log Z_K
=
\Tr\left(
Q_{\beta,\gamma}^{-1}C
\right).
\label{eq:mean-CS}
\end{equation}

Equivalently,
\begin{equation}
\langle\CS_K\rangle_{\beta,\gamma}
=
\sum_{j=1}^n
\frac{\lambda_j}{
\beta+2\gamma\lambda_j
}.
\label{eq:mean-CS-eigenvalues}
\end{equation}
\end{proposition}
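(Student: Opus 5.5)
The plan is to differentiate the explicit formula \eqref{eq:Z-explicit} of Theorem~\ref{thm:partition} and combine it with the exponential-family identity linking log-derivatives of $Z_K$ to expectations.

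\emph{Step 1: log-derivatives as expectations.} On $\mathscr D_K$, which is open in $(0,\infty)\times\R$, the Gaussian integrand in \eqref{eq:Z-definition} is locally dominated. Indeed, for $(\beta',\gamma')$ near $(\beta,\gamma)$ one still has $Q_{\beta',\gamma'}\ge cI$ and $\beta'\ge c$ for some $c>0$. Also $|E_K|$ and $|\CS_K|$ are bounded by quadratic polynomials in $(a,e)$. Hence differentiation under the integral sign is justified, and
\begin{equation*}
\partial_\beta Z_K=-\int E_K\,e^{-\beta E_K-\gamma\CS_K},
\qquad
\partial_\gamma Z_K=-\int \CS_K\,e^{-\beta E_K-\gamma\CS_K}.
\end{equation*}
Dividing by $Z_K$ gives $\langle E_K\rangle=-\partial_\beta\log Z_K$ and $\langle\CS_K\rangle=-\partial_\gamma\log Z_K$.

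\emph{Step 2: explicit differentiation.} From \eqref{eq:Z-explicit},
\begin{equation*}
\log Z_K=n\log(2\pi)-\tfrac n2\log\beta-\tfrac12\log\det Q_{\beta,\gamma}.
\end{equation*}
Jacobi's formula gives $\partial\log\det Q=\Tr(Q^{-1}\partial Q)$. Since $\partial_\beta Q=L$ and $\partial_\gamma Q=2C$, this yields $-\partial_\beta\log Z_K=\frac{n}{2\beta}+\frac12\Tr(Q^{-1}L)$ and $-\partial_\gamma\log Z_K=\frac12\Tr(Q^{-1}2C)=\Tr(Q^{-1}C)$. These are \eqref{eq:mean-energy} and \eqref{eq:mean-CS}.

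As a cross-check, one can compute directly. Under $p_{\beta,\gamma}$ the variable $e$ is $N(0,\beta^{-1}I)$ and $a$ is $N(0,Q^{-1})$, independently. So $\langle\frac12 e^\top e\rangle=\frac{n}{2\beta}$, $\langle\frac12a^\top La\rangle=\frac12\Tr(Q^{-1}L)$, and $\langle a^\top Ca\rangle=\Tr(Q^{-1}C)$.

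\emph{Step 3: eigenvalue form.} Use the factorization from Corollary~\ref{cor:spectral}, $Q=L^{1/2}(\beta I+2\gamma T)L^{1/2}$. Then
\begin{equation*}
\Tr(Q^{-1}C)=\Tr\bigl(L^{-1/2}(\beta I+2\gamma T)^{-1}L^{-1/2}C\bigr)=\Tr\bigl((\beta I+2\gamma T)^{-1}T\bigr)
\end{equation*}
by cyclicity of the trace. Diagonalizing the symmetric matrix $T$ orthogonally gives $\sum_j\lambda_j/(\beta+2\gamma\lambda_j)$, which is \eqref{eq:mean-CS-eigenvalues}.

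The only step needing care is the justification in Step 1 of differentiating under the integral, together with the openness of $\mathscr D_K$. Both follow from continuity of the smallest eigenvalue of $Q_{\beta,\gamma}$ in $(\beta,\gamma)$. Alternatively, the direct Gaussian moment computation in Step 2 avoids the interchange entirely, and the log-derivative identities then follow by comparing with the Jacobi-formula expressions.
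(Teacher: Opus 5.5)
Your proposal is correct and follows essentially the same route as the paper: differentiate $\log Z_K$ from \eqref{eq:Z-explicit} using Jacobi's formula with $\partial_\beta Q_{\beta,\gamma}=L$ and $\partial_\gamma Q_{\beta,\gamma}=2C$, then obtain the eigenvalue form by diagonalizing $T$. Your dominated-convergence justification of $\langle E_K\rangle_{\beta,\gamma}=-\partial_\beta\log Z_K$ and $\langle\CS_K\rangle_{\beta,\gamma}=-\partial_\gamma\log Z_K$, and your direct Gaussian-moment cross-check, supply details that the paper leaves implicit.
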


\begin{proof}
Equation \eqref{eq:Z-explicit} gives
\begin{equation}
\log Z_K
=
n\log(2\pi)
-
\frac{n}{2}\log\beta
-
\frac12\log\det(Q_{\beta,\gamma}).
\label{eq:log-Z}
\end{equation}
For an invertible matrix \(Q\),
\begin{equation}
\partial_\theta\log\det Q
=
\Tr\left(
Q^{-1}\partial_\theta Q
\right).
\end{equation}
Since
\begin{equation}
\partial_\beta Q_{\beta,\gamma}=L
\end{equation}
and
\begin{equation}
\partial_\gamma Q_{\beta,\gamma}=2C,
\end{equation}
differentiating \eqref{eq:log-Z} yields
\eqref{eq:mean-energy} and \eqref{eq:mean-CS}.

Formula \eqref{eq:mean-CS-eigenvalues} follows either from
\eqref{eq:Z-eigenvalues} or by diagonalizing \(T\).
\end{proof}

The differential entropy is
\begin{equation}
S_K(\beta,\gamma)
=
-\int_{\mathcal P_{\mathrm T}}
p_{\beta,\gamma}
\log p_{\beta,\gamma}
\,da\,de.
\label{eq:entropy-definition}
\end{equation}
From \eqref{eq:density},
\begin{equation}
S_K(\beta,\gamma)
=
\log Z_K
+
\beta\langle E_K\rangle_{\beta,\gamma}
+
\gamma\langle\CS_K\rangle_{\beta,\gamma}.
\label{eq:entropy-identity}
\end{equation}

Moreover,
\begin{align}
\beta\langle E_K\rangle_{\beta,\gamma}
+
\gamma\langle\CS_K\rangle_{\beta,\gamma}
&=
\frac{n}{2}
+
\frac12
\Tr\left(
Q_{\beta,\gamma}^{-1}
(\beta L+2\gamma C)
\right)
\nonumber\\
&=
n.
\end{align}
Therefore,
\begin{equation}
S_K(\beta,\gamma)
=
n\log(2\pi e)
-
\frac{n}{2}\log\beta
-
\frac12\log\det(Q_{\beta,\gamma}).
\label{eq:entropy-explicit}
\end{equation}

\begin{remark}
The differential entropy depends on the Lebesgue measure
associated with a chosen Whitney-orthonormal basis. Relative
entropy between two probability measures is independent of
such coordinate normalization and therefore provides the more
intrinsic statistical quantity.
\end{remark}

\section{Minimum relative entropy under a helicity constraint}

For probability densities \(p,q\) on
\(\mathcal P_{\mathrm T}\), write
\begin{equation}
D(p\|q)
=
\int_{\mathcal P_{\mathrm T}}
p\log\frac{p}{q}
\,da\,de.
\label{eq:KL}
\end{equation}

Fix \(\beta>0\), and introduce the unconstrained Maxwell
reference distribution
\begin{equation}
p_{\beta,0}
=
Z_K(\beta,0)^{-1}
e^{-\beta E_K}.
\label{eq:reference}
\end{equation}

\begin{theorem}
\label{thm:min-relative}
Let
\begin{equation}
\gamma\in I_\beta
\end{equation}
and define
\begin{equation}
h_\gamma
=
\langle\CS_K\rangle_{\beta,\gamma}.
\label{eq:h-gamma}
\end{equation}

Among all probability densities \(p\) satisfying
\begin{equation}
\int_{\mathcal P_{\mathrm T}}
p\,\CS_K\,da\,de
=
h_\gamma
\label{eq:constraint}
\end{equation}
and
\begin{equation}
D(p\|p_{\beta,0})<\infty,
\end{equation}
the density \(p_{\beta,\gamma}\) uniquely minimizes
\begin{equation}
D(p\|p_{\beta,0}).
\end{equation}

More precisely,
\begin{equation}
D(p\|p_{\beta,0})
=
D(p\|p_{\beta,\gamma})
+
D(p_{\beta,\gamma}\|p_{\beta,0})
\label{eq:pythagoras}
\end{equation}
for every admissible \(p\) satisfying
\eqref{eq:constraint}.
\end{theorem}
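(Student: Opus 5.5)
The plan is to prove the Pythagorean identity \eqref{eq:pythagoras} directly. Uniqueness and minimality then follow from nonnegativity of relative entropy and from the fact that $D(p\|q)=0$ forces $p=q$ almost everywhere. The key observation is that the log-likelihood ratio between the two Gibbs densities is affine in $\CS_K$. From \eqref{eq:density} and \eqref{eq:reference},
\begin{equation*}
\log\frac{p_{\beta,\gamma}}{p_{\beta,0}}
=
-\gamma\CS_K
-\log Z_K(\beta,\gamma)
+\log Z_K(\beta,0),
\end{equation*}
pointwise on $\mathcal P_{\mathrm T}$. Both normalizations are finite by Theorem~\ref{thm:partition}, since $\gamma\in I_\beta$ and $0\in I_\beta$.

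Fix an admissible $p$ satisfying \eqref{eq:constraint} with $D(p\|p_{\beta,0})<\infty$. I will split the logarithm as
\begin{equation*}
\log\frac{p}{p_{\beta,0}}=\log\frac{p}{p_{\beta,\gamma}}+\log\frac{p_{\beta,\gamma}}{p_{\beta,0}}
\end{equation*}
and integrate against $p$. The second term integrates to $-\gamma h_\gamma-\log Z_K(\beta,\gamma)+\log Z_K(\beta,0)$, using only the constraint \eqref{eq:constraint}. Integrating the same affine expression against $p_{\beta,\gamma}$ instead, and using \eqref{eq:h-gamma}, gives exactly the same value, and that value is $D(p_{\beta,\gamma}\|p_{\beta,0})$. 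This is finite because it is a difference of Gaussian log-partition quantities, explicitly $-\gamma h_\gamma+\tfrac12\log\det Q_{\beta,\gamma}-\tfrac12\log\det(\beta L)$ via \eqref{eq:Z-explicit}. Adding the two pieces yields \eqref{eq:pythagoras}.

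The main obstacle is justifying the splitting of the integral, because $\CS_K$ is an indefinite quadratic form and $p\log(p/p_{\beta,\gamma})$ need not be absolutely integrable a priori. I would handle this as follows. The constraint \eqref{eq:constraint} should be read as requiring $\CS_K\in L^1(p)$, so that the integral $\int p\,\CS_K$ is defined. Then $\log(p_{\beta,\gamma}/p_{\beta,0})\in L^1(p)$. Next, write $\log(p/p_{\beta,\gamma})=\log(p/p_{\beta,0})-\log(p_{\beta,\gamma}/p_{\beta,0})$. The negative part of $p\log(p/p_{\beta,0})$ is integrable, either because $x\log x\ge x-1$ relative to $p_{\beta,0}$ or simply because $D(p\|p_{\beta,0})<\infty$ is assumed to be a well-defined finite integral. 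Hence $p\log(p/p_{\beta,\gamma})$ is the difference of an integrable function and an $L^1$ function, so its integral is well defined and finite and linearity applies. Absolute continuity causes no trouble: $p_{\beta,\gamma}$ and $p_{\beta,0}$ are strictly positive Gaussian densities, so $p\ll p_{\beta,\gamma}$ holds automatically.

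Finally, Gibbs' inequality gives $D(p\|p_{\beta,\gamma})\ge 0$, with equality if and only if $p=p_{\beta,\gamma}$ almost everywhere. Therefore $D(p\|p_{\beta,0})\ge D(p_{\beta,\gamma}\|p_{\beta,0})$ with equality exactly at $p_{\beta,\gamma}$. It remains to note that $p_{\beta,\gamma}$ is itself admissible: it satisfies \eqref{eq:constraint} by the definition \eqref{eq:h-gamma}, and its divergence from $p_{\beta,0}$ is finite by the computation above. This establishes unique minimality.
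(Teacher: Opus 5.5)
Your proof is correct and follows the paper's argument: $\log(p_{\beta,\gamma}/p_{\beta,0})$ is affine in $\CS_K$, so its integrals against $p$ and against $p_{\beta,\gamma}$ coincide, which gives \eqref{eq:pythagoras}, and Gibbs' inequality then gives uniqueness. Your integrability bookkeeping fills in a step the paper leaves implicit, and reading the constraint as $\CS_K\in L^1(p)$ is in fact automatic: $e^{t|\CS_K|}$ is $p_{\beta,0}$-integrable for small $t>0$, so the Donsker--Varadhan inequality shows that $D(p\|p_{\beta,0})<\infty$ already forces $\CS_K\in L^1(p)$.
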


\begin{proof}
From \eqref{eq:density} and \eqref{eq:reference},
\begin{equation}
\log
\frac{
p_{\beta,\gamma}
}{
p_{\beta,0}
}
=
-\gamma\CS_K
-
\log Z_K(\beta,\gamma)
+
\log Z_K(\beta,0).
\label{eq:ratio}
\end{equation}
Thus
\begin{align}
D(p\|p_{\beta,0})
-
D(p\|p_{\beta,\gamma})
&=
\int
p
\log
\frac{
p_{\beta,\gamma}
}{
p_{\beta,0}
}
\nonumber\\
&=
-\gamma h_\gamma
-
\log Z_K(\beta,\gamma)
+
\log Z_K(\beta,0).
\label{eq:difference}
\end{align}

Applying the same calculation with
\begin{equation}
p=p_{\beta,\gamma}
\end{equation}
gives
\begin{equation}
D(p_{\beta,\gamma}\|p_{\beta,0})
=
-\gamma h_\gamma
-
\log Z_K(\beta,\gamma)
+
\log Z_K(\beta,0).
\end{equation}
Combining this equality with \eqref{eq:difference} yields
\eqref{eq:pythagoras}.

Since
\begin{equation}
D(p\|p_{\beta,\gamma})\geq0
\end{equation}
with equality only when
\begin{equation}
p=p_{\beta,\gamma}
\end{equation}
almost everywhere, uniqueness follows.
\end{proof}

\begin{corollary}
\label{cor:strict}
If \(C\neq0\), then
\begin{equation}
\gamma
\longmapsto
\langle\CS_K\rangle_{\beta,\gamma}
\end{equation}
is strictly decreasing on \(I_\beta\).

Consequently, every mean helicity in the image of this map
determines a unique helicity-conditioned inference state.
\end{corollary}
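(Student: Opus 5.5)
The plan is to differentiate the explicit eigenvalue formula \eqref{eq:mean-CS-eigenvalues} from Proposition~\ref{prop:means}. For fixed \(\beta>0\) and \(\gamma\in I_\beta\), every denominator \(\beta+2\gamma\lambda_j\) is strictly positive by Corollary~\ref{cor:spectral}. Differentiating term by term on the open interval \(I_\beta\) gives
\begin{equation*}
\partial_\gamma\langle\CS_K\rangle_{\beta,\gamma}
=
-\sum_{j=1}^n
\frac{2\lambda_j^2}{(\beta+2\gamma\lambda_j)^2}.
\end{equation*}
Equivalently, \(\partial_\gamma\langle\CS_K\rangle_{\beta,\gamma}=-\partial_\gamma^2\log Z_K=-2\Var_{\beta,\gamma}(\CS_K)\), since the identity \(\partial_\gamma^2\log Z_K=\Var(\CS_K)\) for this exponential family follows by differentiating under the integral. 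This second route gives a useful cross-check: the variance of a quadratic form in a Gaussian vector is \(2\Tr((Q^{-1}C)^2)\). Each summand is nonnegative, so the map is nonincreasing.

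For strictness I will show that some \(\lambda_j\neq0\) whenever \(C\neq0\). The matrix \(T=L^{-1/2}CL^{-1/2}\) is congruent to \(C\) through the invertible matrix \(L^{-1/2}\), which exists by Lemma~\ref{lem:positive}. Hence \(T\neq0\). Since \(T\) is real symmetric, a nonzero \(T\) has at least one nonzero eigenvalue. The derivative is therefore strictly negative at every point of \(I_\beta\). Because \(I_\beta\) is an interval (it is convex as an intersection of half-lines), the mean value theorem then gives strict monotonic decrease.

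For the consequence, a strictly decreasing continuous function on an interval is injective. Any prescribed mean helicity \(h\) in its image therefore equals \(h_\gamma\) for exactly one \(\gamma\in I_\beta\). Theorem~\ref{thm:min-relative} then identifies \(p_{\beta,\gamma}\) as the unique relative-entropy minimizer under the constraint \(\langle\CS_K\rangle=h\). The only point I expect to need care is this step: ensuring that uniqueness is meant within the family \(\{p_{\beta,\gamma}\}_{\gamma\in I_\beta}\) at fixed \(\beta\), and that Theorem~\ref{thm:min-relative} upgrades it to uniqueness among all admissible densities. The nondegeneracy \(C\neq0\Rightarrow\) some \(\lambda_j\neq0\) is the only substantive step, and the congruence argument handles it.
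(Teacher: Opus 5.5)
Your argument is correct and is essentially the paper's proof. You differentiate \eqref{eq:mean-CS-eigenvalues} to get $-2\sum_j \lambda_j^2/(\beta+2\gamma\lambda_j)^2$ and use $C\neq0$ to force a nonzero $\lambda_j$. Your congruence argument ($T=L^{-1/2}CL^{-1/2}$ with $L^{-1/2}$ invertible) supplies a step the paper only asserts. One slip in your cross-check: $\partial_\gamma\langle\CS_K\rangle_{\beta,\gamma}=-\partial_\gamma^2\log Z_K=-\Var_{\beta,\gamma}(\CS_K)$, not $-2\Var_{\beta,\gamma}(\CS_K)$, because the factor $2$ already sits inside $\Var_{\beta,\gamma}(\CS_K)=2\Tr\bigl((Q_{\beta,\gamma}^{-1}C)^2\bigr)$. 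This does not affect the sign argument.
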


\begin{proof}
Differentiating \eqref{eq:mean-CS-eigenvalues} gives
\begin{equation}
\frac{d}{d\gamma}
\langle\CS_K\rangle_{\beta,\gamma}
=
-2
\sum_{j=1}^n
\frac{
\lambda_j^2
}{
(\beta+2\gamma\lambda_j)^2
}.
\label{eq:strict-monotonicity}
\end{equation}
If \(C\neq0\), at least one eigenvalue \(\lambda_j\) is
nonzero. Hence the derivative is strictly negative.
\end{proof}

For two admissible parameter pairs,
\begin{align}
&D\left(
p_{\beta_1,\gamma_1}
\,\middle\|\,
p_{\beta_2,\gamma_2}
\right)
\nonumber\\
&\quad=
\log Z_K(\beta_2,\gamma_2)
-
\log Z_K(\beta_1,\gamma_1)
\nonumber\\
&\qquad+
(\beta_2-\beta_1)
\langle E_K\rangle_{\beta_1,\gamma_1}
\nonumber\\
&\qquad+
(\gamma_2-\gamma_1)
\langle\CS_K\rangle_{\beta_1,\gamma_1}.
\label{eq:full-relative}
\end{align}

In particular, at fixed \(\beta\),
\begin{align}
&D\left(
p_{\beta,\gamma_1}
\,\middle\|\,
p_{\beta,\gamma_2}
\right)
\nonumber\\
&\quad=
\log Z_K(\beta,\gamma_2)
-
\log Z_K(\beta,\gamma_1)
\nonumber\\
&\qquad+
(\gamma_2-\gamma_1)
\langle\CS_K\rangle_{\beta,\gamma_1}.
\label{eq:fixed-relative}
\end{align}

This relative entropy does not measure helicity itself.
Instead, it quantifies the distinguishability of two
statistical descriptions obtained by imposing different mean
helicity constraints on the same reduced electromagnetic
system.

\section{Information geometry and helicity susceptibility}

Let
\begin{equation}
\theta^1=\beta,
\qquad
\theta^2=\gamma.
\end{equation}
The Fisher information matrix is
\begin{equation}
g_{ij}
=
\int_{\mathcal P_{\mathrm T}}
p_{\beta,\gamma}
\partial_i\log p_{\beta,\gamma}
\partial_j\log p_{\beta,\gamma}
\,da\,de.
\label{eq:fisher}
\end{equation}

\begin{proposition}
\label{prop:fisher}
On \(\mathscr D_K\),
\begin{equation}
g_{ij}
=
\partial_i\partial_j
\log Z_K(\beta,\gamma).
\label{eq:hessian}
\end{equation}

The entries are
\begin{equation}
g_{\beta\beta}
=
\Var_{\beta,\gamma}(E_K)
=
\frac{n}{2\beta^2}
+
\frac12
\Tr\left(
Q_{\beta,\gamma}^{-1}L
Q_{\beta,\gamma}^{-1}L
\right),
\label{eq:gbb}
\end{equation}
\begin{equation}
g_{\beta\gamma}
=
\Cov_{\beta,\gamma}(E_K,\CS_K)
=
\Tr\left(
Q_{\beta,\gamma}^{-1}L
Q_{\beta,\gamma}^{-1}C
\right),
\label{eq:gbg}
\end{equation}
and
\begin{equation}
g_{\gamma\gamma}
=
\Var_{\beta,\gamma}(\CS_K)
=
2\Tr\left(
Q_{\beta,\gamma}^{-1}C
Q_{\beta,\gamma}^{-1}C
\right).
\label{eq:ggg}
\end{equation}

Equivalently,
\begin{equation}
g_{\gamma\gamma}
=
2
\sum_{j=1}^n
\frac{
\lambda_j^2
}{
(\beta+2\gamma\lambda_j)^2
}.
\label{eq:g-eigenvalues}
\end{equation}
\end{proposition}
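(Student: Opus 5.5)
Since \(p_{\beta,\gamma}\) is an exponential family with sufficient statistics \((E_K,\CS_K)\) and natural parameters \((-\beta,-\gamma)\), the argument is the standard one for exponential families.

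\emph{Score and Fisher matrix.} Write \(T_1=E_K\) and \(T_2=\CS_K\). From \eqref{eq:density},
\begin{equation*}
\partial_i\log p_{\beta,\gamma}
=
-T_i-\partial_i\log Z_K .
\end{equation*}
Proposition~\ref{prop:means} gives \(\partial_i\log Z_K=-\langle T_i\rangle_{\beta,\gamma}\), so the score is the centred statistic \(\langle T_i\rangle-T_i\). Substituting into \eqref{eq:fisher} yields \(g_{ij}=\Cov_{\beta,\gamma}(T_i,T_j)\).

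\emph{Hessian identity.} Differentiate \(\partial_i\log Z_K=-Z_K^{-1}\int T_i e^{-\beta E_K-\gamma\CS_K}\) once more. This gives \(\partial_i\partial_j\log Z_K=\langle T_iT_j\rangle-\langle T_i\rangle\langle T_j\rangle\), which is \eqref{eq:hessian}. The only point needing care is differentiation under the integral sign. On the open set \(\mathscr D_K\), fix a compact neighbourhood of \((\beta,\gamma)\) on which \(Q\ge\epsilon L\) for some \(\epsilon>0\) and \(\beta\ge\epsilon\). The integrands, multiplied by polynomials in \((a,e)\), are then dominated by a single integrable Gaussian \(C'(1+|a|^4+|e|^4)e^{-\epsilon(e^\top e+a^\top La)/2}\). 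I expect this domination step to be the main technical obstacle, and it is routine. Alternatively, the explicit formula \eqref{eq:Z-explicit} shows that \(\log Z_K\) is smooth, and Gaussian moments can be computed directly.

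\emph{Explicit entries.} Differentiate the results of Proposition~\ref{prop:means}, using \(\partial Q^{-1}=-Q^{-1}(\partial Q)Q^{-1}\) together with \(\partial_\beta Q=L\) and \(\partial_\gamma Q=2C\). Then:
\begin{itemize}
\item \(g_{\beta\beta}=-\partial_\beta\langle E_K\rangle=\frac{n}{2\beta^2}+\frac12\Tr(Q^{-1}LQ^{-1}L)\);
\item \(g_{\beta\gamma}=-\partial_\gamma\langle E_K\rangle=\frac12\Tr(Q^{-1}2CQ^{-1}L)\), which agrees with \eqref{eq:gbg} by cyclicity of the trace;
\item \(g_{\gamma\gamma}=-\partial_\gamma\Tr(Q^{-1}C)=2\Tr(Q^{-1}CQ^{-1}C)\).
\end{itemize}
As a consistency check, under \(p_{\beta,\gamma}\) we have \(a\sim N(0,Q^{-1})\). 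Wick's formula \(\Cov(a^\top Aa,a^\top Ba)=2\Tr(Q^{-1}AQ^{-1}B)\) reproduces these entries. The extra \(n/(2\beta^2)\) term comes from the independent electric factor, since \(\frac\beta2 e^\top e\) is \(\frac12\chi^2_n\)-distributed with variance \(n/2\).

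\emph{Eigenvalue form.} Use \(Q=L^{1/2}(\beta I+2\gamma T)L^{1/2}\) from Corollary~\ref{cor:spectral}. This gives \(Q^{-1}C=L^{-1/2}(\beta I+2\gamma T)^{-1}TL^{1/2}\), so by cyclicity
\begin{equation*}
\Tr(Q^{-1}CQ^{-1}C)=\Tr\bigl(((\beta I+2\gamma T)^{-1}T)^2\bigr)=\sum_j\lambda_j^2/(\beta+2\gamma\lambda_j)^2 ,
\end{equation*}
which gives \eqref{eq:g-eigenvalues}. The same result also follows by differentiating \eqref{eq:strict-monotonicity} with a sign change.
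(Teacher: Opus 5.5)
Your proposal is correct and follows essentially the same route as the paper. You identify the score with the centred statistics, get $g_{ij}=\Cov(T_i,T_j)=\partial_i\partial_j\log Z_K$, and obtain the explicit entries from $\partial Q^{-1}=-Q^{-1}(\partial Q)Q^{-1}$; your domination bound, Wick check and direct trace computation for \eqref{eq:g-eigenvalues} simply make explicit what the paper leaves implicit. One small wording fix: \eqref{eq:strict-monotonicity} is already the $\gamma$-derivative of the mean helicity, so \eqref{eq:g-eigenvalues} follows from it by a sign change alone, not by differentiating it again.
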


\begin{proof}
Differentiating \eqref{eq:density} yields
\begin{equation}
\partial_\beta\log p_{\beta,\gamma}
=
-E_K
+
\langle E_K\rangle_{\beta,\gamma}
\end{equation}
and
\begin{equation}
\partial_\gamma\log p_{\beta,\gamma}
=
-\CS_K
+
\langle\CS_K\rangle_{\beta,\gamma}.
\end{equation}
Substitution into \eqref{eq:fisher} identifies the Fisher
matrix with the covariance matrix of
\begin{equation}
(E_K,\CS_K).
\end{equation}

Differentiation under the Gaussian integral gives
\eqref{eq:hessian}. To obtain the explicit expressions, use
\begin{equation}
\partial_\theta Q^{-1}
=
-Q^{-1}
(\partial_\theta Q)
Q^{-1}
\end{equation}
together with
\begin{equation}
\partial_\beta Q=L,
\qquad
\partial_\gamma Q=2C.
\end{equation}

Finally, \eqref{eq:g-eigenvalues} follows from
\eqref{eq:strict-monotonicity}.
\end{proof}

\begin{corollary}
\label{cor:local}
For fixed \(\beta>0\) and \(\gamma\in I_\beta\),
\begin{equation}
D\left(
p_{\beta,\gamma}
\,\middle\|\,
p_{\beta,\gamma+\varepsilon}
\right)
=
\frac{\varepsilon^2}{2}
\Var_{\beta,\gamma}(\CS_K)
+
O(\varepsilon^3)
\label{eq:local}
\end{equation}
as \(\varepsilon\to0\).

If \(C\neq0\), the quadratic coefficient is strictly
positive.
\end{corollary}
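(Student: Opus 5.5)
We start from the fixed-\(\beta\) relative entropy formula \eqref{eq:fixed-relative} with \(\gamma_1=\gamma\) and \(\gamma_2=\gamma+\varepsilon\). Set \(\phi(\gamma)=\log Z_K(\beta,\gamma)\). Then
\begin{equation*}
D\left(p_{\beta,\gamma}\,\middle\|\,p_{\beta,\gamma+\varepsilon}\right)
=\phi(\gamma+\varepsilon)-\phi(\gamma)+\varepsilon\langle\CS_K\rangle_{\beta,\gamma}.
\end{equation*}
Since \(I_\beta\) is open, \(\gamma+\varepsilon\in I_\beta\) for \(|\varepsilon|\) small, so both densities are defined.

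\emph{Regularity.} By Corollary~\ref{cor:spectral},
\[
\phi(\gamma)=\text{const}-\tfrac12\sum_{j}\log(\beta+2\gamma\lambda_j).
\]
Each term is real-analytic wherever \(\beta+2\gamma\lambda_j>0\), so \(\phi\) is \(C^\infty\) (indeed analytic) on \(I_\beta\). We may therefore apply Taylor's theorem with a Lagrange remainder on a compact subinterval containing \(\gamma\). This gives
\[
\phi(\gamma+\varepsilon)-\phi(\gamma)=\varepsilon\phi'(\gamma)+\tfrac{\varepsilon^2}{2}\phi''(\gamma)+O(\varepsilon^3),
\]
with the \(O\)-constant bounded by \(\sup|\phi'''|\) on that subinterval.

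\emph{Identification of the coefficients.} Proposition~\ref{prop:means} gives \(\phi'(\gamma)=-\langle\CS_K\rangle_{\beta,\gamma}\), so the first-order term cancels exactly against \(\varepsilon\langle\CS_K\rangle_{\beta,\gamma}\). Proposition~\ref{prop:fisher} gives \(\phi''(\gamma)=g_{\gamma\gamma}=\Var_{\beta,\gamma}(\CS_K)\). Substituting both yields \eqref{eq:local}.

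\emph{Strict positivity.} If \(C\neq0\), then \(T=L^{-1/2}CL^{-1/2}\neq0\), since \(L^{-1/2}\) is invertible by Lemma~\ref{lem:positive}. Hence some \(\lambda_j\neq0\), and \eqref{eq:g-eigenvalues} shows \(g_{\gamma\gamma}>0\), exactly as in Corollary~\ref{cor:strict}.

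The only point requiring care is the uniformity of the cubic remainder, i.e.\ that \(\phi'''\) stays bounded near \(\gamma\). We handle it by the compact-subinterval argument above. Concretely,
\[
\phi'''(\gamma)=-8\sum_j\lambda_j^3(\beta+2\gamma\lambda_j)^{-3},
\]
which is bounded away from the endpoints of \(I_\beta\).
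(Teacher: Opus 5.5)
Your proof is correct and follows the paper's argument. You Taylor-expand \(\log Z_K(\beta,\cdot)\) at \(\gamma\), substitute into \eqref{eq:fixed-relative}, cancel the linear term via \(\partial_\gamma\log Z_K=-\langle\CS_K\rangle_{\beta,\gamma}\), and identify the quadratic coefficient with \(\tfrac12 g_{\gamma\gamma}=\tfrac12\Var_{\beta,\gamma}(\CS_K)\). Your control of the cubic remainder through analyticity on the open interval \(I_\beta\), and your explicit positivity argument from \eqref{eq:g-eigenvalues} (which the paper leaves implicit), are sound refinements of that same route.
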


\begin{proof}
Expand
\begin{equation}
\log Z_K(\beta,\gamma+\varepsilon)
\end{equation}
at \(\gamma\):
\begin{align}
\log Z_K(\beta,\gamma+\varepsilon)
&=
\log Z_K(\beta,\gamma)
\nonumber\\
&\quad+
\varepsilon
\partial_\gamma\log Z_K(\beta,\gamma)
\nonumber\\
&\quad+
\frac{\varepsilon^2}{2}
\partial_\gamma^2\log Z_K(\beta,\gamma)
+
O(\varepsilon^3).
\end{align}
Substitution into \eqref{eq:fixed-relative} cancels the
linear term because
\begin{equation}
\partial_\gamma\log Z_K
=
-\langle\CS_K\rangle_{\beta,\gamma}.
\end{equation}
The remaining quadratic coefficient is
\begin{equation}
\frac12g_{\gamma\gamma}
=
\frac12
\Var_{\beta,\gamma}(\CS_K).
\end{equation}
\end{proof}

Define the helicity susceptibility by
\begin{equation}
\chi_{\mathrm H}
=
-\partial_\gamma
\langle\CS_K\rangle_{\beta,\gamma}.
\label{eq:susceptibility-definition}
\end{equation}
Then
\begin{equation}
\boxed{
\chi_{\mathrm H}
=
\Var_{\beta,\gamma}(\CS_K)
=
g_{\gamma\gamma}.
}
\label{eq:main-identity}
\end{equation}

Thus the same quantity governs the response to changes in
the helicity multiplier, Chern--Simons fluctuations, and
local statistical distinguishability.

\section{Whitney matrices and paired helicity modes}

\subsection{Mesh-level expressions}

Let
\begin{equation}
\{w_i\}_{i=1}^{N_1}
\end{equation}
be the Whitney edge basis.

Denote by \(M_1\) and \(M_2\) the degree-one and degree-two
Whitney mass matrices. Let \(D_0\) and \(D_1\) represent the
incidence operators \(\delta_0\) and \(\delta_1\).

Define
\begin{equation}
S_{ij}
=
\int_M
w_i\wedge dw_j.
\label{eq:S}
\end{equation}
Proposition \ref{prop:gauge-harmonic} gives
\begin{equation}
S=S^\top
\label{eq:S-symmetric}
\end{equation}
and
\begin{equation}
SD_0=0.
\label{eq:S-gauge}
\end{equation}

Choose a matrix
\begin{equation}
B\in\R^{N_1\times n}
\end{equation}
whose columns form a Whitney-orthonormal basis of
\(V_{\mathrm T}\):
\begin{equation}
B^\top M_1B=I_n.
\label{eq:B}
\end{equation}
Then
\begin{equation}
L
=
B^\top D_1^\top M_2D_1B
\label{eq:L}
\end{equation}
and
\begin{equation}
C
=
B^\top SB.
\label{eq:C}
\end{equation}

The admissibility condition becomes
\begin{equation}
\beta B^\top D_1^\top M_2D_1B
+
2\gamma B^\top SB
>
0.
\label{eq:mesh-admissibility}
\end{equation}

Hence every quantity derived above is determined by
simplicial incidence data, Whitney mass matrices, and the
Chern--Simons pairing \eqref{eq:S}.

\subsection{An idealized paired-helicity block}

Introduce the magnetic-energy normalized variable
\begin{equation}
x=L^{1/2}a.
\label{eq:x}
\end{equation}
Then
\begin{equation}
E_{\mathrm m}(a)
=
\frac12x^\top x
\label{eq:normalized-energy}
\end{equation}
and
\begin{equation}
\CS_K(a)
=
x^\top Tx,
\qquad
T=L^{-1/2}CL^{-1/2}.
\label{eq:normalized-CS}
\end{equation}

Suppose that \(T\) possesses two eigenvalues
\begin{equation}
\lambda_+=\lambda,
\qquad
\lambda_-=-\lambda,
\qquad
\lambda>0.
\label{eq:paired}
\end{equation}

On the corresponding two-mode subspace,
\begin{equation}
\CS_K(x_+,x_-)
=
\lambda(x_+^2-x_-^2).
\label{eq:two-helicity}
\end{equation}

This is an idealized paired-helicity block. Such spectral
pairing is not asserted for arbitrary Whitney triangulations.

The potential-sector Gaussian weight is
\begin{equation}
\exp\left[
-\frac12
(\beta+2\gamma\lambda)x_+^2
-
\frac12
(\beta-2\gamma\lambda)x_-^2
\right].
\end{equation}

Normalizability requires
\begin{equation}
|\gamma|
<
\frac{\beta}{2\lambda}.
\label{eq:paired-domain}
\end{equation}

The two-mode contribution to the mean helicity is
\begin{align}
h_{\mathrm{pair}}(\beta,\gamma)
&=
\frac{\lambda}{
\beta+2\gamma\lambda
}
-
\frac{\lambda}{
\beta-2\gamma\lambda
}
\nonumber\\
&=
-\frac{
4\gamma\lambda^2
}{
\beta^2-4\gamma^2\lambda^2
}.
\label{eq:paired-helicity}
\end{align}

The corresponding susceptibility is
\begin{equation}
\chi_{\mathrm{pair}}(\beta,\gamma)
=
2\lambda^2
\left[
\frac{1}{
(\beta+2\gamma\lambda)^2
}
+
\frac{1}{
(\beta-2\gamma\lambda)^2
}
\right].
\label{eq:paired-susceptibility}
\end{equation}

At \(\gamma=0\),
\begin{equation}
h_{\mathrm{pair}}(\beta,0)=0
\end{equation}
but
\begin{equation}
\chi_{\mathrm{pair}}(\beta,0)
=
\frac{4\lambda^2}{\beta^2}.
\label{eq:paired-zero}
\end{equation}

As
\begin{equation}
|\gamma|
\uparrow
\frac{\beta}{2\lambda},
\end{equation}
one helicity-polarized Gaussian mode loses confinement and
the susceptibility diverges.

At fixed finite mesh, this divergence is not a thermodynamic
phase transition. It identifies the boundary of
normalizability of the conditioned Gaussian ensemble.

\section{Compatibility with Maxwell dynamics}

The reduced Maxwell Hamiltonian is
\begin{equation}
E_K(a,e)
=
\frac12e^\top e
+
\frac12a^\top La.
\end{equation}

With the standard canonical convention,
\begin{equation}
\dot a=e,
\qquad
\dot e=-La.
\label{eq:maxwell-flow}
\end{equation}

\begin{proposition}
\label{prop:dynamics}
Along the reduced Maxwell evolution
\eqref{eq:maxwell-flow},
\begin{equation}
\frac{d}{dt}\CS_K(a(t))
=
2a(t)^\top Ce(t).
\label{eq:CS-evolution}
\end{equation}

In general, \(\CS_K\) is not conserved. Moreover,
\begin{equation}
\frac{d^2}{dt^2}\CS_K(a(t))
=
2e(t)^\top Ce(t)
-
a(t)^\top(CL+LC)a(t).
\label{eq:CS-second}
\end{equation}
\end{proposition}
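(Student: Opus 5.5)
The plan is to differentiate the matrix representation \eqref{eq:matrix-cs} directly along the flow \eqref{eq:maxwell-flow}. First I would check that $V_{\mathrm T}$ is invariant under the flow, so that the coordinate expression $\CS_K(a)=a^\top Ca$ stays valid for all $t$. This holds because $L$ acts on $V_{\mathrm T}\simeq\R^n$ in the chosen Whitney-orthonormal basis, so \eqref{eq:maxwell-flow} is a linear ODE on $\R^n\oplus\R^n$ and $a(t),e(t)$ remain in $V_{\mathrm T}$. The two formulas then follow by the product rule, using only $C=C^\top$ and $L=L^\top$.

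For the first derivative, I would compute
\[
\frac{d}{dt}\bigl(a^\top Ca\bigr)=\dot a^\top Ca+a^\top C\dot a=2a^\top C\dot a ,
\]
where the second equality uses the symmetry of $C$. Substituting $\dot a=e$ gives \eqref{eq:CS-evolution}.

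For the second derivative, I would differentiate $2a^\top Ce$:
\[
2\dot a^\top Ce+2a^\top C\dot e=2e^\top Ce-2a^\top CLa .
\]
Since $a^\top CLa$ is a scalar, it equals its own transpose $a^\top LCa$, using $C^\top=C$ and $L^\top=L$. Hence $2a^\top CLa=a^\top(CL+LC)a$, which gives \eqref{eq:CS-second}.

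For the claim of non-conservation, one counterexample suffices, and this is the only step that needs any thought. Assume $C\neq0$. Choose initial data $a(0)=a_0$ and $e(0)=e_0$ with $a_0^\top Ce_0\neq0$; this is possible because a nonzero symmetric $C$ has some $e_0$ with $Ce_0\neq0$, and then one can take $a_0=Ce_0$. At $t=0$ the derivative $2a_0^\top Ce_0=2|Ce_0|^2$ is nonzero, so $\CS_K$ is not constant along that trajectory. I expect no real obstacle beyond the invariance remark above; the statement is just a careful use of the product rule together with the symmetry of $C$ and $L$.
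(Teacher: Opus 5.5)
Your proposal is correct and matches the paper's proof: the product rule with $C=C^\top$, substitution of $\dot a=e$ and $\dot e=-La$, and symmetrization of the scalar $a^\top CLa$ using $L^\top=L$. Your explicit witness $a_0=Ce_0$, which gives $\frac{d}{dt}\CS_K=2|Ce_0|^2\neq0$ whenever $C\neq0$, sharpens the paper's bare remark that $a^\top Ce$ does not vanish identically.
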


\begin{proof}
Since \(C=C^\top\),
\begin{equation}
\frac{d}{dt}
\left(
a^\top Ca
\right)
=
\dot a^\top Ca
+
a^\top C\dot a
=
2a^\top C\dot a.
\end{equation}
Using \(\dot a=e\) gives
\eqref{eq:CS-evolution}.

Differentiating again,
\begin{align}
\frac{d^2}{dt^2}
\left(
a^\top Ca
\right)
&=
2e^\top Ce
+
2a^\top C\dot e
\nonumber\\
&=
2e^\top Ce
-
2a^\top CLa.
\end{align}
Because
\begin{equation}
a^\top CLa
=
\frac12
a^\top(CL+LC)a,
\end{equation}
formula \eqref{eq:CS-second} follows.
\end{proof}

The continuous analogue is
\begin{equation}
\frac{d}{dt}
\int_M A\wedge dA
=
2\int_M
\partial_tA\wedge dA.
\label{eq:continuous-evolution}
\end{equation}
With the physical convention
\begin{equation}
E=-\partial_tA
\end{equation}
in temporal gauge, this becomes
\begin{equation}
\frac{d}{dt}
\int_M A\cdot B\,d\operatorname{vol}
=
-2\int_M E\cdot B\,d\operatorname{vol}.
\label{eq:EB}
\end{equation}

The sign difference between
\eqref{eq:CS-evolution} and \eqref{eq:EB} simply reflects the
choice of canonical momentum convention.

Since
\begin{equation}
a^\top Ce
\end{equation}
does not vanish identically, the distributions
\(p_{\beta,\gamma}\) with \(\gamma\neq0\) are not generally
stationary under the unconstrained Maxwell flow.

They must therefore be understood as minimum-relative-entropy
states conditioned on instantaneous helicity information.
A dynamical equilibrium interpretation would require an
additional helicity-preserving hypothesis, as occurs in
appropriate ideal magnetohydrodynamic settings.

\section{Discussion and conclusion}

The Whitney discretization separates three distinct
structures.

First, gauge redundancy is encoded by the exact cochain
sector
\begin{equation}
\im\delta_0.
\end{equation}
Because
\begin{equation}
\delta_1\delta_0=0,
\end{equation}
the magnetic field and the Abelian Chern--Simons functional
are invariant under the corresponding discrete gauge
transformations.

Second, harmonic cochains represent genuine cohomological
degrees of freedom. They are not gauge directions, but they
do not contribute to the Abelian Chern--Simons functional in
the globally trivial closed-manifold setting considered
here. Fixing the harmonic sector yields a strictly positive
radiative Maxwell energy.

Third, the Chern--Simons pairing defines a symmetric
quadratic observable on the coexact sector. Conditioning the
Maxwell reference distribution on its mean produces the
unique minimum-relative-entropy state
\begin{equation}
p_{\beta,\gamma}
=
Z_K(\beta,\gamma)^{-1}
e^{-\beta E_K-\gamma\CS_K}.
\end{equation}

Its existence is characterized exactly by
\begin{equation}
\beta L+2\gamma C>0.
\end{equation}

The corresponding partition function, mean helicity,
relative entropy, and Fisher information are computable
directly from Whitney mass matrices, incidence matrices, and
the discrete Chern--Simons pairing.

The principal fluctuation relation is
\begin{equation}
\boxed{
-\partial_\gamma
\langle\CS_K\rangle_{\beta,\gamma}
=
\Var_{\beta,\gamma}(\CS_K)
=
g_{\gamma\gamma}.
}
\label{eq:final}
\end{equation}
Thus helicity response, Chern--Simons fluctuations, and local
statistical distinguishability coincide.

Although the Chern--Simons pairing contains no explicit
Hodge star, the resulting statistical model is not a
metric-independent topological field theory. The Maxwell
energy, Whitney mass matrices, radiative decomposition, and
partition function depend on the Riemannian metric and the
chosen triangulation.

The construction also does not claim mesh-refinement
invariance, a continuum limit, or stationarity under free
Maxwell dynamics. These issues, as well as non-Abelian
extensions, quantization, and boundary contributions, require
additional analysis.
\vskip 12pt

\paragraph{\bf Data availability statement} No data is available for this work.

\vskip 12pt

\paragraph{\bf Conflict of interest statement} The author declares no conflict of interest.

\vskip 12pt

\paragraph{\bf Funding} No funding supported this work.

\vskip 12pt

\paragraph{\bf Acknowledgements} J.-P.M thanks the France 2030 framework programme Centre Henri Lebesgue ANR-11-LABX-0020-01 
for creating an attractive mathematical environment.

\vskip 12pt

\paragraph{\bf Author's Note on AI Assistance}
Portions of the text were developed with the assistance of a generative language model (OpenAI ChatGPT, based on the GPT-4 architecture). The AI was used to assist with drafting, editing, and standardizing the bibliography format. All mathematical content, structure, and theoretical constructions were provided, verified, and curated by the author. The author assumes full responsibility for the correctness, originality, and scholarly integrity of the final manuscript.


\begin{thebibliography}{99}

\bibitem{Woltjer1958}
L. Woltjer,
A theorem on force-free magnetic fields,
\emph{Proc. Natl. Acad. Sci. USA}
\textbf{44} (1958), 489--491.
\doi{10.1073/pnas.44.6.489}

\bibitem{Taylor1974}
J. B. Taylor,
Relaxation of toroidal plasma and generation of reverse
magnetic fields,
\emph{Phys. Rev. Lett.}
\textbf{33} (1974), 1139--1141.
\doi{10.1103/PhysRevLett.33.1139}

\bibitem{Moffatt1969}
H. K. Moffatt,
The degree of knottedness of tangled vortex lines,
\emph{J. Fluid Mech.}
\textbf{35} (1969), 117--129.
\doi{10.1017/S0022112069000991}

\bibitem{ArnoldKhesin1998}
V. I. Arnold and B. A. Khesin,
\emph{Topological Methods in Hydrodynamics},
Applied Mathematical Sciences, Vol. 125,
Springer, New York, 1998.
\doi{10.1007/b97593}

\bibitem{ChernSimons1974}
S.-S. Chern and J. Simons,
Characteristic forms and geometric invariants,
\emph{Ann. of Math.}
\textbf{99} (1974), 48--69.
\doi{10.2307/1971013}

\bibitem{Frisch1975}
U. Frisch, A. Pouquet, J. Léorat, and A. Mazure,
Possibility of an inverse cascade of magnetic helicity in
magnetohydrodynamic turbulence,
\emph{J. Fluid Mech.}
\textbf{68} (1975), 769--778.
\doi{10.1017/S002211207500122X}

\bibitem{Jaynes1957}
E. T. Jaynes,
Information theory and statistical mechanics,
\emph{Phys. Rev.}
\textbf{106} (1957), 620--630.
\doi{10.1103/PhysRev.106.620}

\bibitem{Jaynes1957b}
E. T. Jaynes,
Information theory and statistical mechanics. II,
\emph{Phys. Rev.}
\textbf{108} (1957), 171--190.
\doi{10.1103/PhysRev.108.171}

\bibitem{ArnoldFalkWinther2006}
D. N. Arnold, R. S. Falk, and R. Winther,
Finite element exterior calculus, homological techniques, and
applications,
\emph{Acta Numer.}
\textbf{15} (2006), 1--155.
\doi{10.1017/S0962492906210018}

\bibitem{ArnoldFalkWinther2010}
D. N. Arnold, R. S. Falk, and R. Winther,
Finite element exterior calculus: from Hodge theory to
numerical stability,
\emph{Bull. Amer. Math. Soc.}
\textbf{47} (2010), 281--354.
\doi{10.1090/S0273-0979-10-01278-4}

\bibitem{Bossavit1998}
A. Bossavit,
\emph{Computational Electromagnetism:
Variational Formulations, Complementarity, Edge Elements},
Academic Press, San Diego, 1998.

\bibitem{Stern2015}
A. Stern, Y. Tong, M. Desbrun, and J. E. Marsden,
Geometric computational electrodynamics with variational
integrators and discrete differential forms,
in
\emph{Geometry, Mechanics, and Dynamics:
The Legacy of Jerry Marsden},
Fields Institute Communications, Vol. 73,
Springer, New York, 2015, pp. 437--475.
\doi{10.1007/978-1-4939-2441-7\_19}

\bibitem{Desbrun2005}
M. Desbrun, A. N. Hirani, M. Leok, and J. E. Marsden,
Discrete exterior calculus,
arXiv:math/0508341, 2005.

\end{thebibliography}
\end{document}